\newtheorem{definition}{Definition}
\newtheorem{lemma}{Lemma}
\newtheorem{theorem}{Theorem}
\newtheorem{corollary}[theorem]{Corollary}
\begin{document}






%

\title{HoneyFaces: Increasing the Security and Privacy\\
of Authentication Using Synthetic Facial Images}
%
%
%
%
%

%


\author{%
\IEEEauthorblockN{Mor Ohana}
\IEEEauthorblockA{Computer Science Dept.\\
University of Haifa\\
Haifa, Israel 31905\\
Email: morohana17@gmail.com
}
\and
\IEEEauthorblockN{Orr Dunkelman}
\IEEEauthorblockA{Computer Science Dept.\\
University of Haifa\\
Haifa, Israel 31905\\
Email: orrd@cs.haifa.ac.il
}
\and
\IEEEauthorblockN{Stuart Gibson}
\IEEEauthorblockA{School of Physical Sciences\\
University of Kent\\
Canterbury CT2 7NF, Kent\\
Email: s.j.gibson@kent.ac.uk
}
\and
\IEEEauthorblockN{Margarita Osadchy}
\IEEEauthorblockA{Computer Science Dept.\\
University of Haifa\\
Haifa, Israel 31905\\
Email: rita@cs.haifa.ac.il
}
}

\maketitle
\begin{abstract}

One of the main challenges faced by Biometric-based authentication systems is the need to offer secure authentication while maintaining the privacy of the biometric data.
Previous solutions, such as Secure Sketch and Fuzzy Extractors, rely on assumptions that cannot be guaranteed in practice, and often affect the authentication accuracy.

In this paper, we introduce HoneyFaces: the concept of adding a large set of synthetic faces (indistinguishable from real) into the biometric ``password file''. This password inflation protects the privacy of users and increases the security of the system without affecting the accuracy of the authentication. In particular, privacy for the real users is provided by  ``hiding'' them among a large number of fake users (as the distributions of synthetic and real faces are equal). In addition to maintaining the authentication accuracy, and thus not affecting the security of the authentication process, HoneyFaces offer several security improvements: increased exfiltration hardness, improved leakage detection, and the ability to use a Two-server setting like in HoneyWords. Finally, HoneyFaces can be combined with other security and privacy mechanisms for biometric data.

We implemented the HoneyFaces system and tested it with a password file composed of 270 real users. The ``password file'' was then inflated to accommodate up to $2^{36.5}$ users (resulting in a 56.6 TB ``password file''). At the same time, the inclusion of additional faces does not affect the true acceptance rate or false acceptance rate which were 93.33\% and 0.01\%, respectively.

\end{abstract}

\begin{IEEEkeywords}
Biometrics (access control), Face Recognition, Privacy
\end{IEEEkeywords}

\section{Introduction}
\label{sec:Intro}

Biometric authentication systems are becoming prevalent in access control and in consumer technology. In such systems, the user submits their user name and his/her biometric sample, which is compared to the stored biometric template associated with this user name (one-to-one matching).\footnote{We note that
one-to-many solutions (where the biometrics is compared to that of all users
in the system) demand significantly larger computational effort, and are used
in systems where there is no user cooperation.} The popularity of biometric-based systems stems from a popular belief that such authentication systems are more secure and user friendly than systems based on passwords. At the same time, the use of such systems raises concerns about the security and privacy of the stored biometric data. Unlike passwords, replacing a compromised biometric trait is impossible, since biometric traits (e.g., face, fingerprint, and iris) are considered to be unique. Therefore, the security of biometric templates is an important issue when considering biometric based systems.  Moreover, poor protection of the biometric templates can have serious privacy implications on the user, as discussed in previous work~\cite{Ratha,Uludag}.


Various solutions have been proposed for protecting biometric templates (e.g,~\cite{Ratha,Uludag}). The most prominent of them are secure
sketch~\cite{FuzzyCommitment} and fuzzy extractors~\cite{FuzzyExtractors}. Unfortunately, these solutions are not well adopted in practice.
The first reason for this is the tradeoff between security and usability due to the
degradation in recognition rates~\cite{manesh10}. The second reason is
related to the use of tokens that are required for storing the helper data, thus
affecting usability. Finally,
these mechanisms rely on assumptions which are hard to
verify (e.g., the privacy guarantees of secure sketch assume that the
biometric trait is processed into an almost full entropy string).


 In this work we propose a different approach for protecting biometric templates called {\em HoneyFaces}. In this approach, we hide the real biometric templates among a very large number of synthetic templates that are  indistinguishable from the real ones. Thus, identifying real users in the system becomes a very difficult `needle in a haystack' problem. At the same time, HoneyFaces does not require the use of tokens nor does it affect recognition rate (compared to a system that does not provide any protection mechanism). Furthermore, it can be integrated with other privacy solutions (e.g., secure sketch), offering additional layers of security and privacy.


For the simplicity of the discussion, let us assume that all biometric templates (real and synthetic) are stored in a \emph{biometric ``password file''}. Our novel approach enables the size of this file to be increased by several orders of magnitudes.
Such inflation offers a 4-tier defense mechanism
for protecting the security and privacy of biometric templates with no
usability overhead. Namely, HoneyFaces:
\begin{itemize}
\item Reduces the risk of the biometric password file leaking;
\item Increases the probability that such a leak is detected online;
\item Allows for post-priori detection of the (biometric) password file leakage;
\item Protects the privacy of the biometrics in the case of leakage;
\end{itemize}
In the following we specify how this mechanism works and its applications in different settings.

The very large size of the ``password file'' 
improves the \textbf{resilience of system against its exfiltration}. We show that one can inflate a system with 270 users (180 KB ``password file'') into a system with up to $2^{36.5}$ users (56.6 TB ``password file''). Obviously, exfiltrating such a huge amount of information is hard. Moreover, by forcing the adversary to leak a significantly larger amount of data (due to the inflated file) he either needs significantly more time, or has much higher chances of being caught by Intrusion Detection Systems. Thus, the file inflation facilitates in \textbf{detecting the leakage} while it happens.

The advantages of increasing the biometric ``password file'' can be demonstrated in
networks whose outgoing bandwidth is very limited, such as air-gap networks (e.g., those
considered in~\cite{Shamir1,Shamir2}). Such networks are usually deployed in high-security restricted areas, and thus are expected to employ biometric authentication, possibly in conjunction with other authentication mechanisms. Once an adversary succeeds in infiltrating the network, he usually has a very limited bandwidth for exfiltration, typically using a physical communication channel of limited capacity (with a typical bandwidth of less than~1 Kbit/sec). In such networks, inflating the size of the database increases the
resilience against exfiltration of the database. Namely, exfiltrating 180 KB of information (the size of a biometric ``password file'' in a system with 270 users) takes a reasonable time even in low bandwidth channels compared with 56.6 TB (the size of the inflated biometric ``password file''), which takes more than 5.2 days for exfiltration in 1 Gbit/sec, 14.4 years in 1 Mbit/sec, or about 14,350
years from an air-gaped network at the speed of 1 Kbit/sec.

Similarly to HoneyWords~\cite{HoneyWords}, the fake accounts enable \textbf{detection of leaked files}. Namely, by using two-server authentication settings, each authentication query is first sent to the server that contains the inflated password file. Once the first server authenticates the user, it sends a query to the second server that contains only the legitimate accounts, thus detecting whether a fake account was invoked with the ``correct'' credentials. This is a clear evidence that despite the hardness of exfiltration, the password file (or a part of it) was leaked.

All the above guarantees heavily rely on the inability of the adversary to isolate the real users from the fake ones. We show that this task is nearly impossible in various adversarial settings (when the adversary has obtained access to the password file). We also show that running  membership queries to identify a real user by matching a facial image from an external source to the biometric ``password file'' is computationally infeasible.\footnote{We alert the reader that for any identification system, once the adversary obtains all the user's credentials (user name, biometric, token, etc.) and the password file, she can easily and efficiently perform a membership query by an attempted login.} We analyze the robustness of the system in the worst case scenario in which the adversary has the facial images of all users except one and he tries to locate the unknown user among the synthetic faces. We show that the system protects the privacy of the users in this case too.
To conclude, HoneyFaces \textbf{protects the biometric templates of real users} in all settings that can be protected.

The addition of a large number of synthetic faces may raise a concern about the degradation of the authentication accuracy. However, we show that this is not the case. The appearance of faces follows a multivariate Gaussian distribution, which we refer to in this article as \emph{face-space}, the parameters of which are learned from a set of real faces, including the faces of the system users. We sample synthetic faces from the same generative model constraining them to be at a certain distance from real and other synthetic faces. We selected this distance to be sufficiently large that new samples of real users would not collide with the synthetic ones. Even though such a constraint limits the number of faces the system could produce, the number remains very large. Using a training set of 500 real faces to build the generative face model, we successfully created $2^{36.5}$ synthetic faces.




\subsection{Problem Statement}
Our HoneyFaces system requires a method for generating synthetic faces which
satisfies three requirements:
\begin{itemize}
\item The system should be able to generate a (very) large number of unique synthetic faces.
\item These synthetic faces should be indistinguishable from real faces.
\item The synthetic faces should not affect the authentication accuracy of real users.
\end{itemize}
These requirements ensure that the faces of the real users can hide among the synthetic ones, without affecting recognition accuracy.

\subsection{Relation to Previous Work}
\label{sec:sub:PreviousWork}
There are two lines of research related to the ideas introduced in this paper. One of them is HoneyObjects, discussed in Section~\ref{sec:sub:sub:HoneyObjects}. The second one, discussed in Section~\ref{sec:sub:sub:biometric_synthesis}, is the synthesis of biometric traits.

\subsection{HoneyObjects}
\label{sec:sub:sub:HoneyObjects}
HoneyObjects are widely used in computer security. The use of honeypot users (fake accounts) is an old trick used by system administrators. Login attempts to such accounts are a strong indication
that the password file has leaked. Later, the concept of
Honeypots and Honeynets was developed \cite{Honeypot}.
These tools are used to lure adversaries
into attacking decoy systems, thus exposing their tools and
strategies.
Honeypots and Honeynets became widely used and deployed in the computer
security world, and play an important role in the mitigation of cyber risks.

Recently, Juels and Rivest introduced HoneyWords~\cite{HoneyWords}, a system
offering decoy passwords in addition to the correct one. A user first
authenticates to the main server using a standard password-based
authentication in which the server can keep track of the number of failed attempts. Once one of the stored passwords is used, the server
passes the query to a second server which stores only the correct
password. Identification of the use of a decoy password by the second server,
suggests that the password file has leaked.
Obviously,
just like in Honeypots and Honeynets, one needs to make sure that the
decoy passwords are sampled from the same space as the real passwords
(or from a space as close as possible). To this end, there is a need to
model passwords correctly, a non-trivial task, which was approached
in several works \cite{HoneyWords,ModelPasswords1,ModelPasswords2}. Interestingly,
we note that modeling human faces
was extensively studied and very good models exist (see the discussion in
Section~\ref{sec:sub:Generating}).\footnote{Our idea of significantly inflating the password file by adding many fake accounts can also be used in the case of non-biometric authentication systems. In addition to the security gain offered by fake accounts, the increased size of the password file makes exfiltration attempts harder.}


In HoneyWords it is a simple matter to change a user's password once if it has been compromised. Clearly it is not practicable to change an individual's facial appearance. Thus, when biometric data is employed, the biometric ``password file'' itself should be protected. HoneyFaces protects the biometric data by inflating the ``password file'' such that it prevents leaks, which is a significant difference between HoneyWords and HoneyFaces.


Another decoy mechanism suggested recently, though not directly related to our work,
is Honey Encryption~\cite{HoneyEncryption}. This is an encryption procedure
which generates ciphertexts that are decrypted to different
(yet plausible) plaintexts when decrypted
under one of a few wrong keys, thus making ciphertext-only exhaustive search
harder.

\subsection{Biometric Traits Synthesis}
\label{sec:sub:sub:biometric_synthesis}
Artificial biometric data are understood as biologically meaningful data for existing biometric systems~\cite{YanushkevichSSWS07}. Biometric data synthesis was suggested for different biometric traits, such as faces (e.g.,~\cite{blanz1999morphable,GibsonSB03,solomon2013interactive,sumi2006study}), fingerprints (e.g, ~\cite{araque2002synthesis,Cappelli15,ImdahlHG15,Kuecken}) and iris (e.g.,~\cite{CuiWHTS04,MakthalR05,ZuoSC07}). The main application of biometrics synthesis has been augmenting training sets and validation of biometric identification and authentication systems (see~\cite{YanushkevichSSWS07} for more information on synthesis of biometrics). Synthetic faces are also used in animation, facial composite construction, and experiments in cognitive psychology.

Making realistic synthetic biometric traits has been the main goal of all these methods.
However, the majority of previous work did not address the question of distinguishing the synthetic samples from the real ones.

The work in iris synthesis~\cite{MakthalR05,ZuoSC07} analyses the quality of artificial samples by clustering synthetic, real, and non-iris images into two clusters iris/non-iris. Such a problem definition is obviously sub-optimal for measuring indistinguishability. Supervised learning using real and synthetic data labels has much better chances of success in separating between real and synthetic samples than unsupervised clustering (a weaker learning algorithm) into iris/non-iris groups. These methods also used recognition experiments, in which they compare the similarity of the associated parameters derived from real and synthetic inputs. Again, this is an indirect comparison that shows the suitability of the generation method for evaluating the quality of the recognition algorithm, but it is not enough for testing the indistinguishability between real and synthetic samples.

In fingerprints, it was shown that synthetic samples generated by different methods could be distinguished from the real ones with high accuracy~\cite{GottschlichH14}. Subsequent methods for synthesis~\cite{ImdahlHG15} showed better robustness against distinguishing attacks that use statistical tests based on~\cite{GottschlichH14}.

 Several methods for synthetic facial image generation~\cite{GibsonSB03,solomon2013interactive} provide near photo-realistic representations, but to the best of our knowledge, the question of indistinguishability between real and synthetic faces has not been addressed before.

\subsection{Organization of the Paper}
\label{sec:sub:Structure}

Section~\ref{sec:sub:Generating} describes, with justification, the method we use for generating HoneyFaces. In Section~\ref{sec:System} we present our setup for employing HoneyFaces in a secure authentication system.
The privacy analysis of HoneyFaces, discussed in Section~\ref{sec:PrivacyAnalysis}, shows that 
the adversary cannot obtain private biometric information from the biometric ``password file''.
Section~\ref{sec:SecurityAnalysis} analyses the additional security offered by inflating the ``password file''. We conclude the paper in Section~\ref{sec:Summary}.

\section{Generating HoneyFaces Images}
\label{sec:sub:Generating}

Biometric systems take a raw sample (usually an image) and process it
to extract features or a representation vector, robust (as much as possible)
to changes in
sampling conditions. In the HoneyFaces system, we have an additional requirement ---
the feature space should allow sampling of artificial ``outcomes'' (faces) in
\emph{large numbers}. These synthetic faces will be used as the passwords of the fake users.


\begin{figure*}[t]
\center{
\includegraphics[width=1.5in]{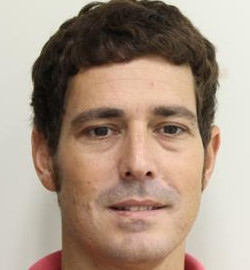}
\includegraphics[width=1.5in]{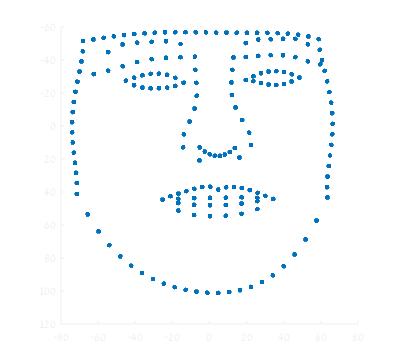}
\includegraphics[width=1.5in]{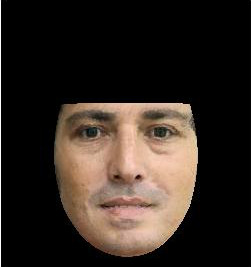}
\caption {Shape vector $x$ (center) and the shape free texture vector $g$ (right) used to obtain the AM coefficient of a facial image (left).} \label{fig_AMM_example}}
\end{figure*}

Different models have been proposed for generating and representing faces including, active appearance models~\cite{CootesET01}, 3D deformable models \cite{blanz1999morphable}, and convolutional neural networks~\cite{li2016convolutional,zhang2015end}. Such models have been used in face recognition, computer animation, facial composite construction (an application in law enforcement), and experiments in cognitive psychology.

Among these models we choose the active appearance model~\cite{CootesET01} for implementing the HoneyFaces concept. An active appearance model is a parametric statistical model that encodes facial variation, extracted from images, with respect to a mean face. This work has been extended and improved in many subsequent papers (e.g.,~\cite{Matthews_2004_4601,TzimiropoulosP13,WuLD08}).  In this context the word `active' refers to fitting the appearance model (AM) to an unknown face to subsequently achieve automatic face recognition~\cite{EdwardsCT98}. AM  can also be used with random number generation to create plausible, yet completely synthetic, faces. These models achieve near photo-realistic representations that preserve identity, although are less effective at modeling hair and finer details, such as birth marks, scars, or wrinkles which exhibit little or no spatial correspondence between individuals.

Our choice of using the AM for HoneyFaces is motivated by two reasons: 1) The representation of faces within an AM is consistent with human visual perception and hence also consistent with the notion of face-space~\cite{valentine1991unified}.  In particular, perceptual similarity of faces is correlated with distance in AM space~\cite{Lewis04}. 2) AM is a well understood model used previously in face synthesis (e.g.~\cite{GibsonSB03,solomon2013interactive}).

Alternative face models may also be considered, provided a sufficient number of training images (as the functions of the representation length) is available to adequately model the facial variation within the population of real faces.
Recent face recognition technology uses deep learning (DL) methods as they provide very good representations for verification.  However, the image reconstruction quality from DL representation is still far from being satisfactory for our application.

%

\subsection{Face representation using AM coefficients}\label{sec:sub:coefficients}
\begin{figure*}[t]
\center{
\includegraphics[width=1.5in]{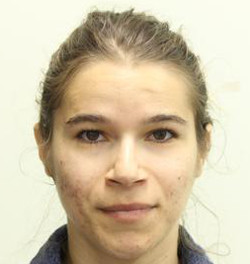}
\includegraphics[width=1.5in]{face_real_85_1.jpg}
\includegraphics[width=1.5in]{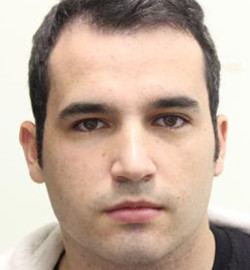}
\includegraphics[width=1.5in]{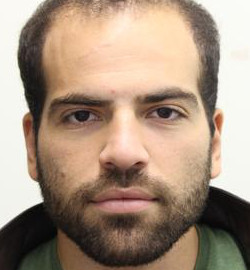}\\
\includegraphics[width=1.5in]{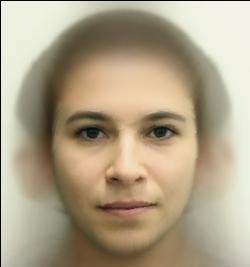}
\includegraphics[width=1.5in]{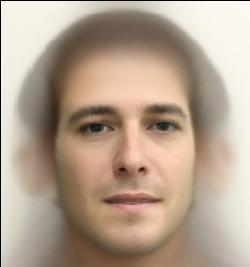}
\includegraphics[width=1.5in]{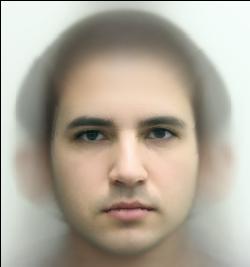}
\includegraphics[width=1.5in]{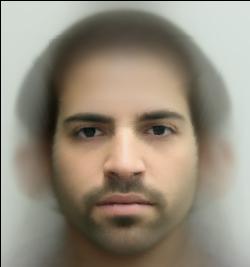}\\
\includegraphics[width=1.5in]{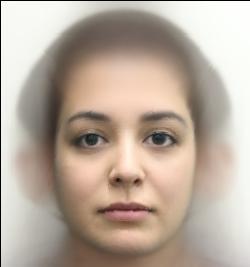}
\includegraphics[width=1.5in]{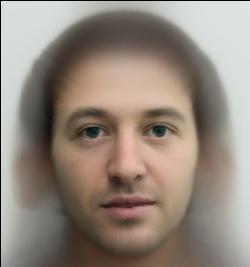}
\includegraphics[width=1.5in]{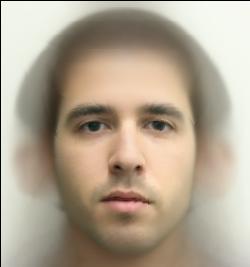}
\includegraphics[width=1.5in]{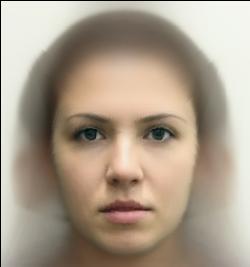}
\caption {Examples of faces: first row shows real images in the training set, second row shows the corresponding reconstructions of the real faces and the third row shows synthetic faces sampled from the training face-space.} \label{fig:face_samples}}
\end{figure*}
AMs describe the variation contained within the training set of faces, used for its construction. Given that this set spans all variations associated with identity changes, the AM provides a good approximation to any desired face. This approximation is represented by a point (or more precisely, localized contiguous region) within the face-space, defined by the \emph{AM} \emph{coefficients}. The distribution of AM coefficients of faces belonging to the same ethnicity are well approximated by an independent, multivariate, Gaussian probability density function~\cite{GibsonSB03,Matthews_2004_4601,TzimiropoulosP13,WuLD08} (for example, see Figure~\ref{fig:distr_ex} that presents the distribution of the first 21 AM coefficients for a face-space constructed from 500 faces.) New instances of facial appearance, the synthetic faces, can be obtained by randomly sampling from such a distribution.  For simplicity, hereafter we assume that faces belong to a single ethnicity. To accommodate faces from different ethnic backgrounds, the same concept could be used with the mixture of Gaussians distribution.

We follow the procedure for AM construction, proposed in~\cite{GibsonSB03}. The training set of facial images, taken under the same viewing conditions, is annotated using a point model that delineates the face shape and the internal facial features. In this process, 22 landmarks are manually placed on each facial image. Based on these points, 190 points of the complete model are determined (see~\cite{GibsonSB03} for details). For each face, landmark coordinates are concatenated to form a shape vector, $x$. The data is then centered by subtracting the mean face shape, $\bar{x}$, from each observation. The shape principle components $P_s$ are derived from the set of mean subtracted observations (arranged as columns) using PCA.
The synthesis of a face shape (denoted by $\hat{x}$) from the {\em shape model} is done as follows,
\begin{equation}\label{eq:shape_bit}
\hat{x}=P_s b_s + \bar{x},
\end{equation}
where $b_s$ is a vector in which the first $m$ elements are normally distributed parameters that determine the linear combination of shape principal components
and the remaining elements are equal to zero. We refer to $b_s$ as the {\em shape coefficients}.

Before deriving the texture component of the AM, training images must be put into correspondence using non-rigid shape alignment procedure.
Each shape normalized and centered RGB image of a training face is then rearranged as a vector $g$. Such vectors for all training faces form a matrix which is used to compute the texture principle components $P_g$ by applying PCA.
A face texture  (denoted by  $\hat{g}$) is reconstructed from the {\em texture model} as follows,
\begin{equation}\label{eq:texture_bit}
\hat{g}=P_g b_g + \bar{g},
\end{equation}
where $b_g$ are the {\em texture coefficients} which are also normally distributed and $\bar{g}$ is the mean texture.

The final model is obtained by a PCA on the concatenated shape and texture parameter vectors. Let $Q$ denote the principal components of the
concatenated space. The AM coefficients ($c$)
are obtained from the corresponding shape ($x$) and texture ($g$) as follows,
\begin{equation}\label{eq:AM}
c=Q^T\left[
          \begin{array}{c}
             r b_s\\
             b_g\\
          \end{array}
        \right]\equiv Q^T\left[
          \begin{array}{c}
             w P_s^T(x-\bar{x})\\
             P_g^T(g-\bar{g})\\
          \end{array}
        \right]
\end{equation}
where $w$ is a scalar that determines the weight of shape relative to texture.

Figure~\ref{fig_AMM_example} illustrates the shape vector $x$ (center image) and the shape free texture vector $g$ (on the right) used to obtain the AM coefficients.

AM coefficients of a \textbf{real face} are obtained by projecting its shape $x$ and texture $g$ onto the shape and texture principal components correspondingly and then combining the shape and texture parameters into a single vector and projecting it onto the AM principal components.  In order to create the \textbf{synthetic faces}, we first estimate a $d$-dimensional Gaussian distribution ${\cal N}(0^d,\vec{\sigma}^2)$ of the AM coefficients using the training set of real faces.  
Then AM coefficients of synthetic faces are obtained by directly sampling from this distribution, discarding the samples beyond $s$ standard deviations.\footnote{We chose $s$ such that all training samples are within $s$ standard deviations from the mean.}

Theoretically, the expected distance between the samples from AM distribution to its center is
about $\sqrt{d}$ standard deviation units. We observed that the distance of real faces from the center is
indeed close to $\sqrt{d}$ standard deviation units. In other words, AM coefficients are most likely to
lie on the surface of an $d$-dimensional ellipsoid with radii $r_i=k\cdot \sigma_i$, where $k\approx \sqrt{d}$.
Hence to sample synthetic faces, we use the following process: sample $v$ from a $d$-dimensional Gaussian ${\cal N}(0^d,1^d)$, normalize $v$ to the unit length and multiply coordinate-wise by $k\vec{\sigma}$. To minimize the differences between the AM representations of real and synthetic faces, we apply the same normalization process to the AM coefficients of the real faces as well.



\subsection{Reconstruction}\label{sec:sub:recon}

The biometric ``password file'' of the HoneyFaces system is composed of records, containing the
AM coefficients of either real or synthetic faces. The coefficients are
sufficient for the authentication process without reconstructing the face.  However, we use reconstructed faces in our privacy and security analysis, thus in the following, we show how to reconstruct faces from their corresponding AM coefficients.

First, the shape and texture coefficients are obtained from the AM coefficients as follows,
$b_s = Q_s c$ and $b_g = Q_g c$, where $[Q_s^T Q_g^T]^T$ is the AM basis. Then the texture and shape of the face are obtained via Eq.~(\ref{eq:shape_bit})
and~(\ref{eq:texture_bit}).
Finally, the texture $\hat{g}$ is warped onto the shape $\hat{x}$, resulting in a facial image. Figure~\ref{fig:face_samples} shows several examples of reconstructed real faces and synthetic faces, sampled from the estimated distribution of AM coefficients.


\section{Implementation of HoneyFaces Concept} 
\label{sec:System}
\begin{figure*}
\center{
   \includegraphics[height=1.8in]{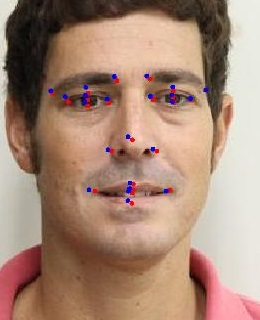}
  \includegraphics[height=1.8in]{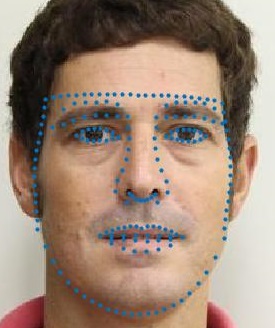}
  \includegraphics[height=1.8in]{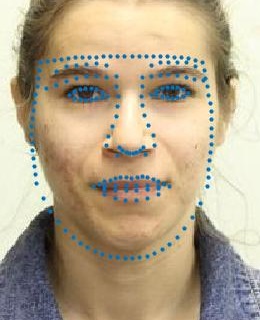}
\caption{The leftmost image shows the landmarks obtained by running the Face++ algorithm on both the reference image (red points) and the test image (blue points) drawn on the reference image. The center image shows the reference face shape transformed to the test image of the same subject  (a legitimate verification attempt). The rightmost image shows the reference shape transformed to a different subject (an imposter verification attempt). The transformed reference shape together with the underlying image is used for the computation of the AM coefficients of the test image.}\label{fig:lmk_samples}}
\end{figure*}
To prevent exfiltration and protect privacy of the users, we create a very large number of synthetic faces. These faces can be incorporated in the authentication system in different ways.
For example, the HoneyWords \cite{HoneyWords} method stores a list of passwords (one of which is correct and the rest are fake) per account. In our settings, both the number of synthetic faces and the ratio of synthetic to real faces should be large. Thus, the configuration, in which the accounts are created solely for real users, requires a very large number of synthetic faces to be attached to each account.  Hence, in such an implementation, during the authentication process, a user's face needs to be compared to a long list of candidates (all fake faces stored with the user name). This would increase the authentication time by a factor equal to the synthetic-to-real ratio, negatively affecting the usability of the system and leading to an undesirable trade off between privacy and usability.

Another alternative is creating many fake accounts with a single face as a password. This does not change the authentication time of the system (compared to a system with no fake accounts). Since most real systems have very regular user names (e.g., the first letter of the given name followed by the family name), it is quite easy to generate fake accounts following such a convention. As we show in Section~\ref{sec:sub:Blowup}, this allows inflating the password file to more than 56.6 TB (when disregarding the storage of user names).

One can also consider a different configuration, aimed to fool an adversary
that knows the correct user names, but not the real biometrics. Specifically, we can store several faces in each account (instead of only one) in addition to the fake accounts (aimed at adversaries without knowledge of user names). The faces associated with a fake account are all synthetic. The faces associated with a real account include one real face of that user and the rest are synthetic one. In such a configuration the authentication time does not increase significantly, but the total size of the ``biometric data'' and the ratio of real-to-synthetic faces remains large. Moreover, the adversary that knows the user name still needs to identify the real face among several synthetic faces.

In this work we implemented and analyzed the configuration in which real and decoy users have an account with a single password. The majority of the users are fake in order to hide the real ones. Each user (both real and fake) has an account name and a password composed of 80 AM coefficients.  These coefficients are derived from the supplied facial image for real users or artificially generated for decoy ones.

\subsection{Face Modeling}
The number of training subjects for the face-space construction should be larger than the number of system users. This provides a better modeling of the facial appearance, allowing a large number of synthetic faces to be created, and protecting the privacy of system's users as discussed in Section~\ref{sec:PrivacyAnalysis}.
We used a set of 500 subjects to train the AM. 270 of them were the users of the HoneyFaces system. All images in the training set were marked with manual landmarks using a tool similar to the one used in~\cite{GibsonSB03}. We computed a 50-dimensional shape model and a 350-dimensional texture model as described in Section~\ref{sec:sub:Generating} and we reduced the dimension of the AM parameters to 80. We note that this training phase is done once, and needs to contain the users of the system mainly for optimal authentication rates. However, as we later discuss in Section~\ref{sec:sub:FaceSpace}, extracting biometric information of real users from the face-space  is infeasible.

All representations used in the system were normalized to unit norm and then multiplied by 7 standard deviations.  This way we forced all samples (real and synthetic) to have the same norm, making the distribution of distances of real and synthetic faces very similar to each other (see Figure~\ref{fig:distr_ex}).

\subsection{Inflating the Biometric ``Password File'' with Synthetic Faces}
\label{sec:sub:Blowup}
We used the resulting face-space to generate synthetic faces. We discarded synthetic faces that fall closer than a certain distance from the real or previously created synthetic faces. The threshold on the distance between faces of different identities was set to 4,800, thereby minimizing the discrepancy between the distance distributions of real and synthetic faces. This minimum separation distance prevents collisions between faces and thus the addition of synthetic faces does not affect the authentication accuracy of the original system (prior to inflation).

The process of synthetic face generation is very efficient and takes only  $1.2903\times10^{-4}$ seconds on average using Matlab.

Using 500 training faces we were able to create about $2^{36.5}$ synthetic faces, with sufficient distance from each other. We strongly believe that more faces can be generated (especially if the size of the training set is increased), but $2^{36.5}$ faces that occupy 56.6 TB seems sufficient for proof of concept.


\begin{figure}
\center{
   \includegraphics[height=2.3in]{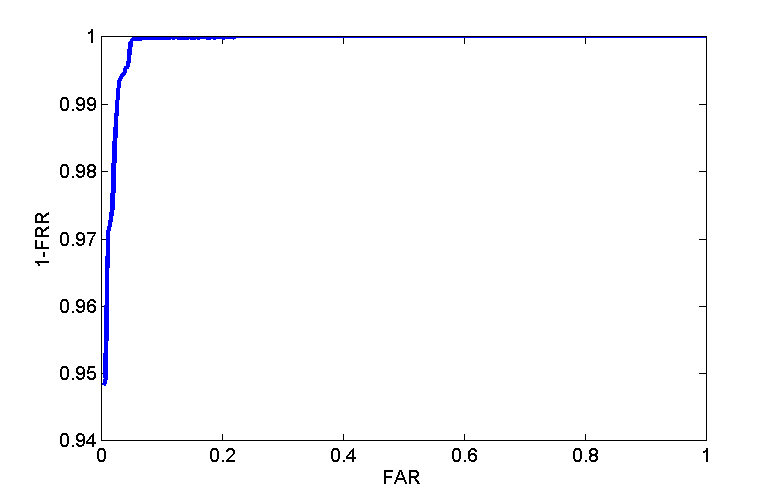}
\caption{ROC curve of the verification experiment, specified in Section~\ref{sec:sub:Usbaility}.}\label{fig:ROC}}
\end{figure}

\subsection{Authentication}

The authentication  process of most biometric systems is composed of the user
supplying the user name and her or his facial image. This image (hereafter the test image)
is aligned to conform with the reference image stored for that user. After the registration, the distance between the test and reference templates are computed and compared to some predefined threshold.

To find the registration between the test and the reference facial templates in our system, we first reconstruct the facial shape of the corresponding subject in the database from the AM coefficients (as shown in Section~\ref{sec:sub:recon}). We then run an automatic landmark detector on the test image (using Face++ landmark detector~\cite{Face++}) and use these landmarks and the corresponding locations in the reference shape to find the scaling, rotation, and translation transformations between them. Then we apply this transformation to the reference shape to put it into correspondence with the coordinate frame of the test image.



The AM coefficients of the test image are computed using the transformed reference shape and the test image itself (as shown in Section~\ref{sec:sub:coefficients}) and then compared to the stored AM coefficients --- the password, using the L2 norm. The threshold on the L2 distance was set to 3,578 which corresponds to 0.01\% of FAR. Note that the threshold is smaller than the distance between the faces (4,800) used for synthetic face generation.  Figure~\ref{fig:lmk_samples} illustrates  the authentication process for the genuine and imposter attempts.



\subsection{Usability as an Authentication System}
\label{sec:sub:Usbaility}

\subsubsection{Accuracy} We ran 270 genuine attempts, comparing the test image with the corresponding reference image, and about 4,200,000 impostor attempts (due to the access to Face++). For a threshold producing an FAR of 0.01\%, our system showed the true acceptance rate (100-FRR)  of 93.33\%. Figure~\ref{fig:ROC} shows the corresponding ROC curve. Our tests showed no degradation in FRR/FAR after the inclusion of synthetic faces.

\subsubsection{Efficiency}
Finding landmarks in a test image, using the Face++ landmark detector~\cite{Face++}, takes 1.42 seconds per subject on average.  We note that the implementation of the landmark detector is kept at the Face++ server and thus the reported times include network communications. Running the detector locally will significantly reduce the running time. Obtaining the AM coefficients  of a test image and comparing them to those of the target identity in the database takes additional 0.53 seconds on average.  This brings us to a total of 1.95 seconds (on average) for a verification attempt.

The system was implemented and tested in Matlab R2014b 64-bit on Windows 7, in 64-bit OS environment with Intel's i7-4790 3.60GHZ CPU and 16GB RAM. Local
implementation that uses C is expected to improve the running times significantly (though not faster than 1 ms).


\section{Privacy Analysis}
\label{sec:PrivacyAnalysis}

Our privacy analysis targets an adversary with access to the inflated
biometric ``password file'', and is divided into three cases. The first scenario, discussed
in Section~\ref{sec:sub:NoPrior}, is an adversary
who has no prior knowledge about the users of the system. Such an adversary
tries to identify the real users out of the fake ones. The second
scenario, discussed in Section~\ref{sec:sub:out_source}, concerns an
adversary that tries to achieve the same goal, but has access to a comprehensive, external source of facial images that adequately represents the world wide variation (population) in facial appearance but does not know who the users are.
The last scenario assumes that the adversary
obtained the biometric data of all but one out of the system's users,
and wishes to use this
to find the biometrics of the remaining user. We discuss this case in Section~\ref{sec:sub:FaceSpace}.

\subsection{Privacy with No Prior Knowledge}
\label{sec:sub:NoPrior}

We first discuss the scenario in which the adversary has the full database
(e.g., after breaking into the system) and wishes to identify the real users but has
no prior knowledge concerning the
real users. More explicitly, this assumption means that the adversary does
not have a candidate list and their biometrics, to check if they are in the database.
\begin{definition}
An inflated password file is a file that contains $N$ facial templates, $n$ of which correspond to real faces and remaining $N-n$ are synthetic faces sampled from the same face-space as the real faces.
\end{definition}
\begin{definition}
A simulated password file is a file that contains $N$ facial templates, all of which are  synthetic faces sampled from the same face-space.
\end{definition}

\begin{lemma}
An adversary that can distinguish between an inflated password file and a
simulated password file, can be transformed into an adversary that extracts
all the real users. Similarly, an adversary that can extract real users from
a password file can be used for distinguishing between inflated
and simulated password files.
\end{lemma}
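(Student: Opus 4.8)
The plan is to prove the two implications separately, treating the statement as the standard equivalence between a decision (distinguishing) problem and a search (extraction) problem, and building an explicit polynomial-time reduction in each direction. Throughout I would lean on the fact, established in Section~\ref{sec:sub:Generating}, that fresh synthetic faces can be sampled efficiently from exactly the same face-space distribution as the real faces. This ``re-sampling'' operation is the engine of both reductions: overwriting a synthetic record with a freshly sampled one leaves the file's distribution unchanged, whereas overwriting a real record changes it, so a detectable change under re-sampling is a certificate of ``realness.''

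\textbf{Direction 1 (extraction $\Rightarrow$ distinguishing).} Given an adversary $E$ that on an inflated file outputs the set $S$ of the $n$ real records, I would build a distinguisher $D$ as follows. On input file $F$ (inflated or simulated), run $E(F)$ to obtain a candidate set $S$, form $F'$ by overwriting the records indexed by $S$ with fresh synthetic faces, and test whether this overwrite ``mattered.'' If $F$ was inflated and $E$ succeeded, then $F'$ is now a simulated file, so any statistic the extraction relied on behaves as on a file with no real users; if $F$ was already simulated, overwriting synthetic records by synthetic records is distributionally a no-op. $D$ outputs \emph{inflated} exactly when this statistic shifts. The conceptual point is that a simulated file contains no distinguished subset for $E$ to recover, so $E$'s guarantee can only be met on inflated inputs.

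\textbf{Direction 2 (distinguishing $\Rightarrow$ extraction).} This is the more delicate direction, and I expect it to be the main obstacle. Given a distinguisher $D$ with non-negligible advantage $\varepsilon$ between inflated and simulated files, I would run a hybrid argument. Define hybrids $H_0,\ldots,H_n$ where $H_0$ is the given inflated file and $H_j$ is obtained from $H_{j-1}$ by overwriting one more real record with a fresh synthetic face, so that $H_n$ is simulated. Since $D$ separates $H_0$ from $H_n$ with advantage $\varepsilon$, by the triangle inequality some neighbouring pair $(H_{j-1},H_j)$ is separated with advantage at least $\varepsilon/n$, and these two files differ in exactly one slot: real in one, synthetic in the other. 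To obtain an extractor I would, for each slot $i$ of $F$, estimate by repeated sampling and re-running $D$ whether re-sampling slot $i$ with a fresh synthetic face changes $D$'s output distribution; the real slots are precisely those for which re-sampling is detectable, since re-sampling a synthetic slot is a distributional no-op while re-sampling a real slot is not.

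The hard part will be making Direction 2 fully rigorous. The per-slot advantage $\varepsilon/n$ must be estimated reliably, which costs $\mathrm{poly}(n/\varepsilon)$ invocations of $D$ together with a Chernoff-type concentration bound, and one must argue that $D$'s sensitivity genuinely localizes to individual real slots rather than to some global feature that a single-slot test cannot see — which is exactly what the hybrid decomposition buys us, since every unit of $D$'s advantage is ``charged'' to some individual real-versus-synthetic slot swap. I would also flag the two modelling assumptions that keep the reduction honest: the efficiency of the synthetic sampler (given earlier) and the premise, built into the two definitions, that real and synthetic faces are drawn from the same distribution, so that overwriting a synthetic record is truly undetectable.
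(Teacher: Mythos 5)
Your Direction 2 is essentially the paper's own proof: the same hybrid argument, replacing one record at a time, observing that a synthetic-for-synthetic swap is a distributional no-op so all of $D$'s advantage must be charged to real-for-synthetic swaps, which localizes an advantage of at least $\varepsilon/n$ on some real slot; your remarks on Chernoff-style estimation are in fact more careful than the paper, which simply asserts success probability at least $\varepsilon/n$ at a cost of $N$ invocations of $D$. One caveat: your claim that ``the real slots are precisely those for which re-sampling is detectable'' overclaims. The hybrid argument only guarantees that \emph{at least one} real slot carries advantage $\varepsilon/n$; other real slots may contribute zero advantage and be invisible to $D$, so this extractor recovers some real user, not necessarily all of them. (The paper's proof has the same limitation and, unlike you, states it: it claims identification of ``at least one of the real users.'')

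Direction 1 is where you genuinely deviate, and your mechanism has a gap. The paper's reduction is the trivial one: hand the received file directly to the extractor $E$ and output ``inflated'' exactly when $E$ succeeds in extracting a face; since a simulated file contains no real faces, $E$ can only succeed on inflated inputs, so the distinguisher inherits $E$'s running time and success rate unchanged. Your version --- overwrite the slots named by $E$ with fresh synthetic faces and ``test whether the overwrite mattered'' --- is not a well-defined algorithm. Detecting whether the overwrite mattered from a single sample is itself a distinguishing problem: any concrete statistic changes numerically under resampling in \emph{both} cases, and only its distribution is preserved in the simulated case, so ``the statistic shifts'' is not an executable test. The natural instantiation, re-running $E$ on $F'$ and checking whether its output changes, can be fooled: $E$'s behaviour on simulated files is completely unconstrained, so after you overwrite the real slots (turning $F$ into a simulated file) $E$ may simply output the same index set again, causing your $D$ to answer ``simulated'' on an inflated input. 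The fix is to discard the resampling machinery and use the direct reduction; the sentence you call your ``conceptual point'' --- that a simulated file contains nothing for $E$ to recover, so $E$'s guarantee can only be met on inflated inputs --- is already the entire argument.
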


\begin{proof}
We start with the simpler case --- transforming an adversary that
can extract the real faces into a distinguisher between the two files. The
reduction is quite simple. If the adversary can extract real faces
out of the password file (and even only a single real face), we just give
it the password file we have received. If the adversary succeeds in
extracting any face out of it, we conclude that we received the inflated password
file. Otherwise, we conclude that we received a simulated password file.
It is easy to see that the running time of the distinguishing attack and
its success rate are exactly the same as that of the original extraction
adversary.

Now, assume that we are given an adversary that can distinguish between an inflated password file and a simulated one with probability $\epsilon$. We start by recalling that
the advantage of distinguishing between two simulated ones is necessarily zero. Hence,
one can generate a hybrid argument, of replacing one face at a time in the
file. When we replace a synthetic face with a different synthetic face, we have not
changed the distribution of the file. Thus, the advantage drops only when we replace
a real face with a synthetic face, which suggests that if there are
$n$ real users in the system, and $N$ total users in the system, we can
succeed in identifying at least one of the real users of the system with probability
greater than or equal to $\epsilon/n$ and running time of at most $N$ times the running time
of the distinguishing adversary.
\end{proof}

\begin{corollary}
\label{cor_1}
If the distributions of the inflated password file and the simulated password file
are statistically indistinguishable, an adversary
with no prior knowledge (of either user's biometrics or user names) cannot identify the real users.
\end{corollary}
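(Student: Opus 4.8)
The plan is to derive the corollary from the first reduction established in the Lemma by a straightforward contrapositive argument. Recall that statistical indistinguishability of the inflated and simulated password files means precisely that the statistical distance between the two distributions is negligible, so that \emph{every} test---even one carried out by a computationally unbounded adversary---has only negligible advantage in telling the two files apart. The corollary will follow once we observe that any successful identification of a real user already constitutes such a test.

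First I would suppose, toward a contradiction, that there exists an adversary $\mathcal{A}$ with no prior knowledge of the users' biometrics or user names that, given a password file, extracts at least one real user with non-negligible probability $\epsilon$. By the first direction of the Lemma, $\mathcal{A}$ can be transformed into a distinguisher $\mathcal{D}$ between an inflated and a simulated password file: $\mathcal{D}$ runs $\mathcal{A}$ on its input and outputs ``inflated'' if $\mathcal{A}$ returns any real face and ``simulated'' otherwise. As noted in the proof of the Lemma, this transformation preserves both the running time and the success probability, so $\mathcal{D}$ distinguishes the two files with advantage exactly $\epsilon$.

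The contradiction is then immediate: a distinguisher with non-negligible advantage $\epsilon$ cannot exist when the two distributions are statistically indistinguishable, since the statistical distance upper-bounds the advantage of every distinguisher. Hence no such $\mathcal{A}$ can exist, and an adversary with no prior knowledge cannot identify even a single real user---a fortiori it cannot identify the set of real users. Note that I only rely on the advantage-preserving direction of the Lemma (extractor $\Rightarrow$ distinguisher); the second, lossier direction, which costs a factor $1/n$ in success probability and a factor $N$ in running time, is not needed here.

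I expect the only point requiring care to be the interpretation of the hypothesis. Because the relevant reduction incurs no loss---a real face is recovered on exactly the inputs where $\mathcal{A}$ succeeds---the argument does not depend on the efficiency of $\mathcal{D}$. It is precisely the \emph{statistical} (rather than merely computational) formulation of indistinguishability that makes the conclusion unconditional: the bound on the statistical distance rules out even unbounded identification strategies, so no complexity assumption is invoked and the statistical distance alone does all the work.
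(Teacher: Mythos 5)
Your proposal is correct and matches the paper's intent exactly: the paper states Corollary~\ref{cor_1} as an immediate consequence of the Lemma, relying precisely on the loss-free direction you use (extractor $\Rightarrow$ distinguisher), with statistical indistinguishability ruling out any distinguisher and hence any extractor. Your additional observations---that the lossy $1/n$, factor-$N$ direction is not needed, and that the statistical (rather than computational) hypothesis is what makes the conclusion hold against unbounded adversaries---are accurate refinements of the same argument.
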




Theoretically, synthetic and real faces are sampled from the same distribution and thus are indistinguishable according to Corollary~\ref{cor_1}. However, in practice, synthetic faces are sampled from a parametric distribution which is estimated from real faces. The larger the set of faces, used to estimate  the distribution, the closer these distributions will be. In practice, the number of training faces is limited which could introduce some deviations between the distributions. Our following analysis shows that these deviations are too small to distinguish between the distributions of real and synthetic faces either by statistical tests or by human observers.


The first part of the analysis performs a statistical test of the AM coefficients of the real and the synthetic faces and shows that these distributions are indeed very close to each other. The second part studies the distribution of mutual distances among real and synthetic faces and reaches the same conclusion. Finally, we perform a human experiment on the reconstructed and simulated faces, showing that even humans can not distinguish between them.

\subsubsection{Statistical Tests on the AM Coefficients}
\label{sec:AM_coeffs_attack}

\begin{figure*}[t]
\includegraphics[width=8in]{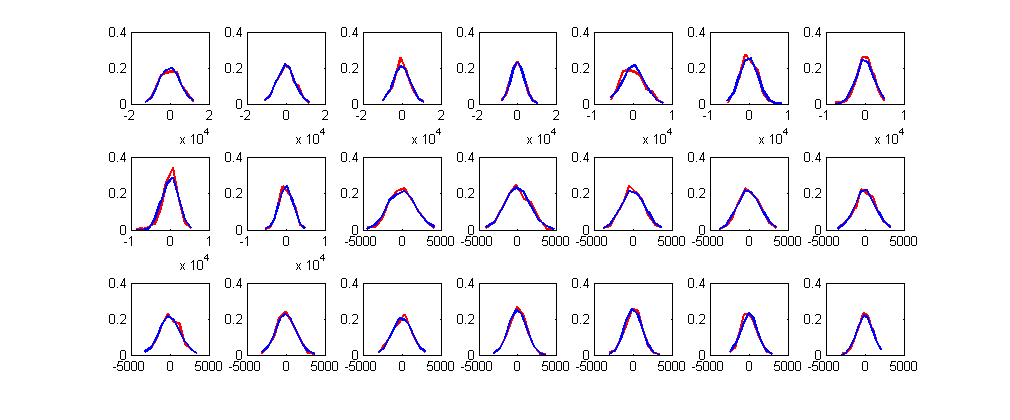}
\caption {Comparing the distributions of the AM coefficients for the first 21 (out of 80) principle components. Real face are shown in red and synthetic ones in blue.} \label{fig:distr_ex}
\end{figure*}

The AM coefficients are well approximated by a Gaussian distribution in all dimensions~\cite{GibsonSB03,Matthews_2004_4601,TzimiropoulosP13,WuLD08}.
Therefore, sampling AM coefficients for synthetic faces from the corresponding distribution is likely to produce representations that cannot be distinguished by standard hypothesis testing from real identities.  The examples of real and synthetic distributions for the first 21 dimensions are depicted in Figure~\ref{fig:distr_ex} and the following analysis verifies this statement.

First, we show that coefficients of real and synthetic faces cannot be reliably distinguished based on two sample Kol\-mogo\-rov--Smirnov (KS) test. To this end, we sampled a subset of 500 synthetic samples from 80-dimensional AM and we compare it to the 500 vectors of coefficients of training images. We ran the KS test on these two sets for each of the 80 dimensions and recorded the result of the hypothesis test and the corresponding p-value. We repeated this test 50 times, varying the set of synthetic faces. The KS tests supported the hypothesis that the two samples come from the same distributions in 98.72\% of the cases with a mean p-value 0.6 (over 50 runs and 80 components, i.e., 4000 tests). These results show that AM coefficients of real and synthetic faces are indistinguishable using a two-sample statistical test.

\subsubsection{Distribution of Mutual Distances}\label{sec:distance_attack}

We analyzed the distributions of distances between
the real faces, synthetic ones, and a mixture of both.
Figure~\ref{fig:dist_distr} shows that
these distributions, both in the case of Euclidean distances and in the case of angular distances,\footnote{Angular distance between the faces is defined to be one minus the cosine of the included angle between points (treated as vectors).} are very close. Hence, the statistical distance between them is negligible,
suggesting that attacks trying to use mutual distances
are expected to be ineffective.

\begin{figure*}[t]
\center{
\includegraphics[height=2in]{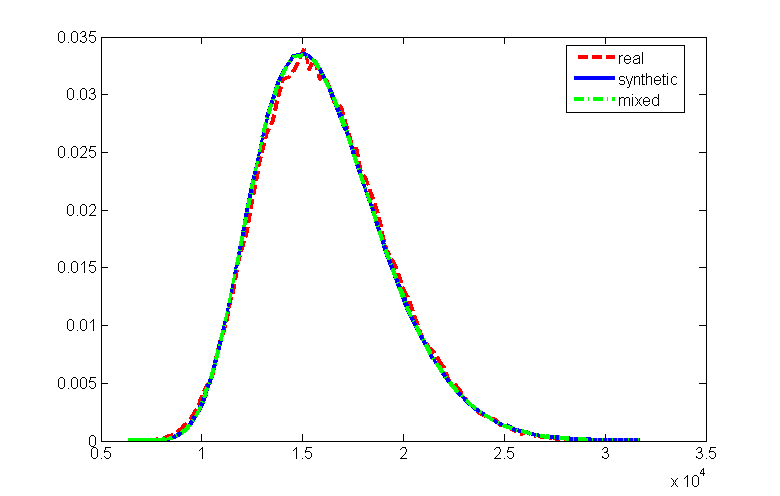}
\includegraphics[height=2in]{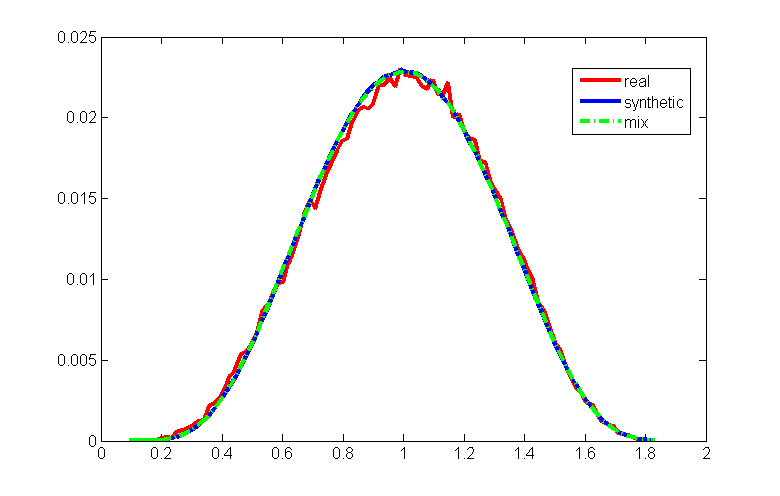}
\caption {Distributions of distances between different real faces, different synthetic faces, and a mix of them. On the left Euclidean distance, and on the right angular distance.} \label{fig:dist_distr}}
\end{figure*}

\subsubsection{Human Experiment}
\label{sec:HumanDistinguisher}


We conducted a human experiment containing two steps. In the first step,
the participants were shown
a real face, not used in the experiment, and its reconstruction.\footnote{The
real face was decomposed into AM coefficients and then reconstructed.}
In the second step of the experiment, each participant was presented with
the same set of 16 faces (11 of which were synthetic and 5 of which were real)
and was asked to classify them as real or fake.
We also allowed the users to avoid
answering in the case of uncertainty or fatigue.

The 11 synthetic faces
were chosen at random from all the $2^{36.5}$ synthetic faces we generated,
and the 5 real ones were chosen at random from the 500 real faces.
For the real faces, we computed the AM coefficients for each real image and then used
the method described in Section~\ref{sec:sub:recon} to generate the real faces and synthetic faces from the model.
Examples of real and synthetic faces are provided in the second and third rows, respectively, of Figure~\ref{fig:face_samples}.

Out of 179 answers we have received, 97 were correct, showing a success rate
of~54.19\%. The fake faces received 120 answers, of which 66 were correct
(55\%). The real faces received 59 answers, of which 31 were correct
(52.5\%). Our analysis shows that the answers for each face are distributed
very similarly to the outcome of a Binomial random variable with a probability of success at each trial of 0.5.



\subsection{Adversary with an External Source of Facial Images}
\label{sec:sub:out_source}
Next, we analyze the case where an adversary has access to the inflated ``password file'' and to an extensive external source of facial images (e.g. the Internet). We consider two attack vectors: the first tries to use  membership queries with random facial images to match real users of the system, the second attempts to distinguish  between real and synthetic faces using a training process on a set of real facial images unrelated to the users of the system.

\subsubsection{Membership Queries}
\label{sec:sub:Membership}

An adversary could use a different source of facial images to try and run a membership query against the HoneyFaces system to obtain the biometric of the real users. To match a random image from an external source of facial images, the adversary must run the authentication attempt with all users of the system (including the fake ones). Our experiments show that the current implementation takes about 2 seconds per authentication attempt (mostly due to the landmarking via Face++).
Even under the unrealistic assumption that the authentication time could be reduced to 1 ms, it would take about $2^{36} \cdot 0.001 = 2^{26}$ seconds (slightly more than 2 CPU years),
to run the matching of a single facial image against $2^{36}$ fake faces. We
note that one cannot use a technique to speed up this search and comparison (such
as kd-trees) as the process of comparison of faces requires aligning them (based
on the landmarks), which cannot be optimized (to the best of our knowledge).


One can try to identify the membership of a person in the system by projecting his/her image onto the face-space of the system and analyzing the distance from the projection to the image itself. If the face-space was constructed from system users only, a small distance could reveal the presence of the person in the face-space.  Such an attack can be easily avoided by building the face-space from  a sufficiently large (external) source of faces. Such a face-space approximates many different appearances (all combinations of people in the training set) and thus people unrelated to the users of the system will also be close to the face-space. We conclude that a membership attack to obtain the real faces from the data base is impractical.

\subsubsection{Machine Learning Attacks}
\label{sec:ML_attacks}

The task of the adversary who obtained the inflated ``biometric password file'' is to distinguish the
real faces from the synthetic ones. He can consider using a classifier that was trained to separate real faces from the fake ones. To this end the adversary needs to construct a training set of real and synthetic faces. Synthetic faces can be generated using the system's face-space. However, the real faces of the system  are unavailable to the adversary.

One way an adversary might approach this problem is by employing a different set of real faces (a substitute set) to construct the face-space. He then can create a training set by generating synthetic faces using that space and reconstructing the real faces from the substitute set following the algorithms described in Section~\ref{sec:sub:Generating}. A trained classifier could then be used to classify the faces in the biometric ``password file''.

The substitute training set is likely to have different characteristics than the original one. The adversary could try to combine the system's face-space with the substitute set in attempt to improve the similarity of the training set to the biometric ``password file''.
%
Then, the adversary can construct the training set of real faces by projecting the images from the substitute set on the mixed face-space and reconstructing them as described in Section~\ref{sec:sub:recon}. To create a training set of synthetic faces, the adversary can either use the mixed face-space or the system's face-space.

Deep learning and, in particular, convolutional neural networks (CNN), showed close to human performance in face verification~\cite{Parkhi15,TaigmanYRW15}. It is a common believe that the success of the CNN in recognition tasks is due to its ability to extract good features. Moreover, it was shown that CNN features can be successfully transferred to perform  recognition tasks in similar domains (e.g.~\cite{Donahue_ICML2014,ROSS,TaigmanYRW15}). Such techniques are referred to as fine tuning or transfer learning. It proceeds by replacing the upper layers of the fully trained DL network (that solves a related classification problem) by layers that fit the new recognition problem (the new layers are initialized randomly).  The updated network is then trained on the new classification problem with the smaller data set. Note, that most of the network does not require training, only slight tuning to fit the new classification task, and the last layer can be well trained using good CNN features and a smaller data set.

Following this strategy, we took the VGG-face deep network \cite{Parkhi15} that was trained to recognize 2,622 subjects and applied the transfer learning method to train a DL network to classify between real and synthetic faces. To this end, we replaced the last fully connected 2,622 size layer with a fully connected layer of size 2 and trained this new architecture in the following settings.

In all experiments we split the training set for training and validation of the network. Then we applied the trained network on a subset of system's data set to classify the images into real and synthetic. The subset included all real faces and a subset of the synthetic faces (same size as real set to balance the classification results).

Setting 1: A face-space was constructed from 500 faces belonging to the substitute set. The training set included 400 reconstructed real faces and  400 synthetic faces, generated  using the substitute face-space. The validation set included 100 reconstructed real faces and 100 synthetic faces from the same domain, not included in the training set. The results on the substitute validation set showed that the DL network classifies 62.5\% of faces correctly. The results on the system's set dropped to 53.33\%, which is close to random.

Setting 2:
A face-space was constructed by combining the system's face-space with the substitute set. The training set included 400 real faces projected and reconstructed using the mixed face-space and  400 synthetic faces, generated using the mixed face-space. The validation set included 100 reconstructed real faces and 100 synthetic faces from the same domain, not included in the training set. The results on the validation set showed good classification: 75\% of synthetic faces were classified as synthetic and 93\% of real faces were classified as real. However, the same network classified all faces of the system's face set as synthetic. This result shows that using a mixed face-space to form a training set is not effective. The prime reason for this is the artifacts in synthetic images due to variation in viewing conditions between the sets.

Setting 3: The real training and validation sets were the same as in Setting 2. The synthetic training and validation sets were formed by generating synthetic faces using system's face-space. Here the classifier was able to perfectly classify the validation set, but it classified all system's faces as synthetic. This shows that using real and synthetic faces from different face-spaces introduces even more differences between them, which do not exist in system's biometric ``password file''.

To conclude, the state-of-the-art deep learning classifier showed accuracy of 53.33\% in distinguishing between  real and synthetic faces in the system's biometric ``password file''. This result is close to random guessing.





\subsection{Finding the Last User (or what can you Learn from the Face-Space)}
\label{sec:sub:FaceSpace}

An adversary who obtains the facial images of all but one of the real users of the system can try and use it for extracting information about the remaining user from the password file. If the training set used for constructing the face-space contains only the users of the system, the following simple attack will work: Recall that the authentication procedure requires removing the mean face from the facial image obtained in the authentication process. Thus, the mean of all faces in the training set is stored in the system. The adversary can find the last user by computing the mean of the users he holds and solving a simple linear equation. To mitigate this attack, and to allow better modeling of the facial appearance, the training set should contain a significant amount of training faces that are not users of the system. Note that these additional faces must be discarded after the face-space is constructed.

Assuming that the training set for the face-space construction was not limited to a set of system users (as is the case in our implementation), the adversary could try the following attack. Create $K=N-n+1$ face-spaces by adding each unknown face from the biometric ``password file'' in turn to the $n-1$ real faces that are in the possession of the adversary. $K$ is equal to the number of synthetic faces in the biometric ``password file'' plus one real face. Then compare these $K$ face-spaces to the one stored in the system (using statistical distance between the distributions). Such comparison provides a ranking of unknown faces to be the $n$'th real face. If the attack is effective, we expect the face-space including the $n$'th user to be highly ranked (i.e., to appear in a small percentile). However, if the distribution of the rankings associated with the face-space including the $n$'th real face over random splits of $n-1$ known and 1 unknown face is (close to) uniform, then we can conclude that the adversary does not gain any information about the last user using this attack.

In our implementation of the attack, we assume that the adversary knows 269 faces of real users and he tries to identify the last real user among the synthetic ones.  Running the attack with all synthetic faces is time consuming.  To get statistics of rankings we can use a much smaller subset of synthetic faces. Specifically, we used 100 synthetic faces and ran the experiment over 100 randomized splits into 269 known and 1 unknown faces. Figure~\ref{fig:dist_rankings} shows the histogram of rankings associated with the face-space including the last real user in 100 experiments. The histogram confirms that the distribution of rankings is indeed uniform, which renders the attack ineffective.


An alternative approach, that the adversary may take, is to analyze the effects of a single face on the face-space distribution. However, our experiments show that the statistical distances
between neighboring distributions (i.e., generated from
training sets that differ by a single face) are insignificant. Specifically, the average statistical distance between the distribution estimated from the full training set (of 500 real faces) and all possible sets of 499 faces (forming 500 neighboring sets, each composed of a different
subset of 499 faces) is $1.2809\cdot 10^{-5}$ and the maximal distance is $6.3696\cdot 10^{-5}$. These distances are negligible compared to the standard deviations of the face-space Gaussians (the largest std is 6,473.7 and the smallest is 304.1717).  These small differences suggest that one can use differential privacy mechanisms with no (or marginal) usability loss (for example, by using ideas related to~\cite{DPPCA}) to mitigate attacks that rely on prior knowledge of the system's users. We leave the implementation and evaluation of this mechanism for future research.

\begin{figure*}[t]
\center{
\includegraphics[height=2in]{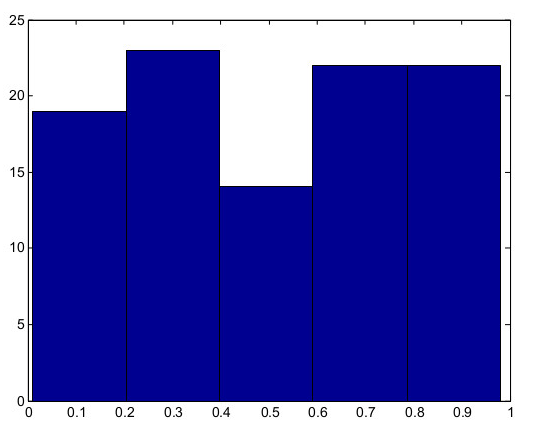}
\caption {Histograms of rankings associated with the face-space including the last unknown users in comparison to the system's face-space over 100 random choices of the unknown real face. The histogram agrees with the uniform distribution} \label{fig:dist_rankings}}
\end{figure*}

To conclude, the HoneyFaces system protects the privacy of users even in the extreme case when the adversary learned all users but one, assuming that the training set for constructing the face-space contains a sufficiently large set of additional faces.

\section{Security Analysis}
\label{sec:SecurityAnalysis}

We now discuss the various scenarios in which HoneyFaces improve the
security of a biometric data.
We start by discussing the scenario
of limited outgoing bandwidth networks (such as air-gaped networks), and showing
the affects of the increased file size on the exfiltration times. We follow
by discussing the effects HoneyFaces has on the detection of the
exfiltration process. We conclude the analysis of the security
offered by our solution in the scenario of partial exposure of the
database.


\subsection{Exfiltration Times in Limited Outgoing Bandwidth}

The time needed to exfiltrate a file is easily
determined by the size
of the file to be exfiltrated and the bandwidth. When the exfiltration
bandwidth is very slow (e.g., in the air-gap networks studied
in~\cite{Shamir1,Shamir2}), a 640-byte representation of a face (or
5,120-bit one) takes between 5 seconds (at 1,000 bits per second rate) to
51 seconds (in the more realistic 100 bits per second rate). Hence,
leaking even a 1 GByte database takes between 92.6 to 926 days (assuming
full bandwidth, and no need for synchronization or error correction overhead).
The size of the password file can be inflated to contain all the
$2^{36.5}$ faces we created, resulting in a 56.6 TBytes file size
(whose leakage would take about 14,350 years in the faster speed).

A possible way to decrease the file size is to compress the file.
Our experiments show that Linux's
zip version 3.0, could squeeze the password file by only 4\%. It is highly unlikely
that one could devise a compression algorithm that succeeds in compressing
significantly more. In other words, compressing the face file reduces
the number of days to exfiltrate 1 GByte to 88.9 days (in the faster
speed).

One can consider a lossy compression algorithm, for example by using only the coefficients associated with the  principle components (carrying most information). We show in Section~\ref{sec:sub:Rates} that this approach requires using many coefficients for identification. Hence, we conclude that if the bandwidth is limited, exfiltration of the full database in acceptable time limit is infeasible.

\subsection{Improved Leakage Detection}
The improved leakage detection stems from two possible defenses: The use
of Intrusion Detection Systems (and Data Loss Prevention products) and the
use of a two-server settings as in HoneyWords.

Intrusion detection systems, such as snort, monitor the network for
suspicious activities. For example, a high outgoing rate of DNS queries
may suggest an exfiltration attempt and raise an alarm~\cite{IDS-story}.\footnote{The analysis reported in~\cite{IDS-story} suggests that an increased outgoing DNS queries at the rate of a few dozens a second was deemed suspicious. The size of a normal DNS query is up to 512 bytes, suggesting that 100 DNS queries per
second can carry 51,200 bytes of information, i.e., a communication rate of
at most 409,600 bits/sec.}
Similar exfiltration attempts can also increase the detection of
data leakage (such as an end machine which changes its HTTP footprint and
starts sending a large amount of information to some external server).
Hence, an adversary who does not take these tools into account is very
likely to get caught. On the other hand, an adversary who tries to
``lay low'' is expected to have a reduced exfiltration rate, preventing
quick leakage and returning to the scenario discussed in the previous
section.

The use of HoneyFaces also allows for a two-server authentication setting
similarly to HoneyWords~\cite{HoneyWords}. The first server uses a database
composed of the real and the synthetic faces. After a successful login
attempt is made into this system, a second authentication query is
sent to the second server, which holds only the real users of the system.
A successful authentication to the first server that uses a fake
account, is thus detected at the second server, raising an
alarm.

\subsection{Analyzing Partial Leakage}
\label{sec:sub:Rates}
\begin{figure*}[t]
\center{
\includegraphics[height=1.7in]{face_real_1_1.jpg}
\includegraphics[height=1.7in]{face_1_1.jpg}
\includegraphics[height=1.7in]{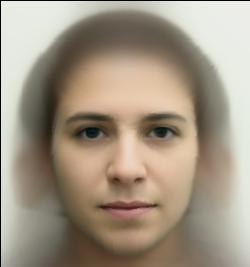}
\includegraphics[height=1.7in]{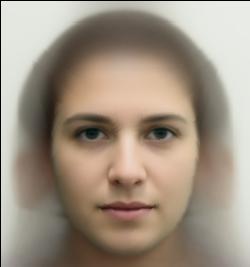}\\

\includegraphics[height=1.7in]{face_real_3_1.jpg}
\includegraphics[height=1.7in]{face_3_1.jpg}
\includegraphics[height=1.7in]{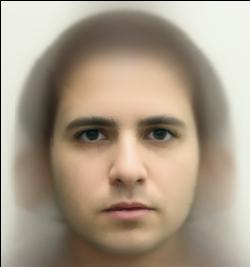}
\includegraphics[height=1.7in]{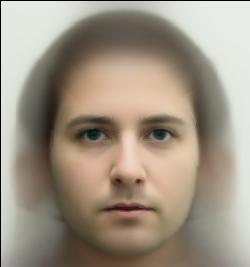}\\

\includegraphics[height=1.7in]{face_real_11_1.jpg}
\includegraphics[height=1.7in]{face_11_1.jpg}
\includegraphics[height=1.7in]{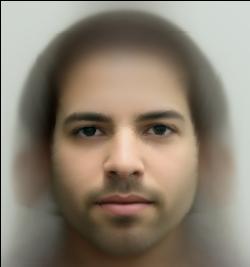}
\includegraphics[height=1.7in]{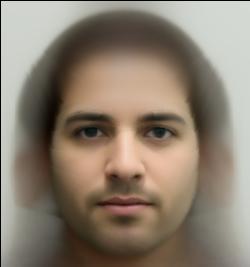}\\
\caption {Examples of reconstructions: from left to right 80, 30, 10 coefficients. The resemblance with the original image declines with the number of coefficients used for reconstruction. Reconstruction with less than 30 coefficients has no similarity to the original image. } \label{fig:grad_rec}}
\end{figure*}
We showed that exfiltrating the entire password file in acceptable time is infeasible if the bandwidth is limited. Hence, the adversary can decide to pick one of two approaches (or combine
them) when trying to exfiltrate the file --- either leak only partial
database (possibly with an improved ratio of real to synthetic faces), or to leak partial representations such as the first 10 AM
coefficients out of 80 per user.

As we showed in the privacy analysis (Section~\ref{sec:PrivacyAnalysis}), statistical tests or Machine Learning methods fail to identify the real faces among the synthetic ones. Using membership queries to find real faces in the database is computationally infeasible without prior knowledge of the real user names.
We conclude that reducing the size of the data set by identifying the real users or significantly improving the real to synthetic ratio is impossible.

The second option is to leak a smaller number of the coefficients (a partial representation). Leaking a smaller number of coefficients can be done faster than the entire record,
and allow the adversary to run on his system (possibly with greater computational power), any algorithm he wishes for the identification of the real users.
In the following, we show that partial representations (that significantly decrease the size of the data set) do not provide enough information for successful membership queries.

We experimented with 10 coefficients (i.e., assume that the adversary leaked
the first 10 AM coefficients of all users). As the adversary does not
know the actual threshold for 10 coefficients, he can try and approximate
this value using the database. Our proposed method for this estimation is
based on computing the distance
distribution for 30,000 faces from the database, and setting a threshold
for authentication corresponding to the 0.01\% ``percentile'' of
the mutual distances. We then take
test sets of real users' faces and of outsiders' faces
and for each face from these sets, computed the minimal distance
from this face to all the faces in the reduced biometric ''password file''. We assume that if this
distance is smaller than the threshold, then the face was in the system,
otherwise we conclude that the face was not in it.

Our experiments show that for the 0.01\% threshold, 98.90\% of the outsider
set and 99.26\% of the real users were below the threshold. In other words,
there is almost no difference between the chance of determining that a user of
the system is indeed a user vs.~determining that an outsider is a user of the
system. This supports the claim that 10 coefficients are insufficient to
distinguish between real users and outsiders.

We also used a smaller threshold which
tried to maximize the success rate of an outsider to successfully match to
a real face. For this smaller threshold, 71.08\% of the outsiders were
below it compared with 74.07\% of the real users.

To further illustrate the effects of partial representation on the reconstructed
face, we show in Figure~\ref{fig:grad_rec} the reconstruction of faces from
80, 30, and 10 coefficients, compared with the real face. As can be seen,
faces reconstructed from 30 coefficients are somewhat related to the original
face, but faces reconstructed from 10, bare little resemblance to the original.
Although it is possible to match a degraded face to the corresponding original when a small number of faces are shown (Figure~\ref{fig:grad_rec}), visual
matching is impossible among $2^{36.5}$ faces.

Thus, an adversary wishing to leak partial information about an
image, needs to leak more than 10 coefficients.

To conclude, exfiltrating even a partial set of faces (or parts of the
records) does not constitute a plausible attack vector against
the HoneyFaces system.

\section{Summary}
\label{sec:Summary}

In this paper we explored the use of synthetic faces for increasing the security and privacy of face-based authentication schemes. We have proposed a
new mechanism for inflating the database of users (HoneyFaces) which guarantees users' privacy with no usability loss. Furthermore, HoneyFaces offers improved resilience against
exfiltration (both the exfiltration itself and its detection). We also showed that this mechanism does not interfere with the basic authentication role of the system
and that the idea allows the introduction of a two-server authentication solution as in
HoneyWords.

Future work can explore the application of the HoneyFaces idea to other
biometric traits (such as iris and fingerprints). We believe that due to the
similar nature of iris codes (that also follow multi-dimensional Gaussian
distribution), the application of the concept is going to be quite straightforward.

\section*{Acknowledgments}

The funds received under the binational
UK Engineering and Physical Sciences Research Council
project EP/M013375/1 and Israeli Ministry of Science and
Technology project 3-11858,
``Improving cyber security using realistic synthetic face
generation'' allowed this work to be carried out.

\bibliographystyle{abbrv}

\end{document}